\newtheorem{theorem}{Theorem}
\newtheorem{lemma}{Lemma}
\newtheorem{corollary}{Corollary}
\date{}
\def\bea{\begin{eqnarray}}
\def\eea{\end{eqnarray}}
\def\be{\begin{equation}}
\def\ee{\end{equation}}
\def\bes{\begin{equation*}}
\def\ees{\end{equation*}}
\def\bs{\begin{split}}
\def\es{\end{split}}
\def\beas{\begin{eqnarray*}}
\def\eeas{\end{eqnarray*}}
\def\cse{{\mathcal S}}
\def\cT{{\mathcal T}}
\begin{document}
\title{Admissible Equivalence Transformations for Linearization of Nonlinear Wave Type Equations }
\author{Saadet S. Özer }

\maketitle {}
Istanbul Technical University, Faculty of Science and Letters,\\
Department of Mathematics, 34469 Maslak Istanbul
Turkey\\
email: saadet.ozer@itu.edu.tr\\ 
\begin {abstract}
In the present paper we consider a general family of two dimensional wave equations which represents a great variety of linear and nonlinear  equations within the framework of the transformations of  equivalence groups.  We have investigated the existence problem of  point transformations that lead mappings between linear and nonlinear members of particular families and determined the structure of the nonlinear terms of linearizable equations.  We have also given  examples about some equivalence transformations between linear and nonlinear equations and obtained  exact solutions of  nonlinear equations via the linear ones.

\end{abstract}
\vskip 5mm
{\bf keywords:}
Lie Group Application;  Equivalence Transformations; Exact Solution;  Nonlinear Wave Equation
\section{Introduction}
Several system of partial differential equations contain some
arbitrary functions or parameters. Such systems represent a family
of equations of the same structure. Almost all field equations in
continuum physics involve some arbitrary parameters since they
describe certain common or similar behaviors of diverse materials.
 Equivalence groups are groups of
transformations which leave a given family of equations 
invariant. They map an arbitrary member of the family onto another
member of the same family of equations, so that they transform a known solution
of an equation  into a solution of another equation in the same family. 
\\
It was first noted by Lie ~\cite{lie1} that every invariant system
of differential equations ~\cite{lie2} and every variational problem
~\cite{lie3} could be directly expressed in terms of differential
invariants. Lie's preliminary results were generalized by Tresse
~\cite{tresse} and the first systematic treatment was formulated by
Ovsiannikov ~\cite{ovs} who had observed that the usual Lie's
infinitesimal invariance approach could be employed in order to
construct equivalence groups. \\
Various well-developed methods have been used to construct equivalence transformations. The general theory of determining
transformation groups and algorithms  can be found in the references
\cite{ovs, olver, suhubiexterior, ibr1}. In the present work, we
employ the results obtained
 by \c Suhubi \cite{suhubisecond} for second order quasilinear partial differential equations.
 The method he used, depends on a
geometrical approach based on Cartan's \cite{car} formulation of
partial differential equations in terms of exterior differential
forms. It was first formulated in a seminal paper by
Harrison and Estabrook \cite{har}  and explored further by Edelen \cite{ede}. We can
refer  the reader for more general results and applications to
\cite{oze, suh2} and \cite{suh3}. The reader may also have a look at  the work on determining exact solutions of nonlinear diffusion equation by Özer \cite{ozer2018} to see some particular  applications.  \\
Nonlinear wave equations have been taken great interest by many researchers in the works of application of Lie Groups.  Equivalence groups of some members are examined to inversitage the group classification,  differential invariants, linearization problem and so on. Equivalence transformations for a special one dimensional nonlinear wave equation $u_{tt}-u_{xx}=f(u,u_t,u_x)$ are studied by Tracina \cite{tracina}, she has determined solutions of nonlinear equations via differential invariants. Group classification of a one dimensional family of equation with nonlinearity on the coefficient $u_{xx}$ is given by Song et.al. \cite{song}, in their study they have not investigate the  linearization problem. Ibragimov et al. \cite{ibrwave} have considered the  wave equation with nonlinear force term within the frame work of equivalence transformations, and a part of their result on properties of the equation to be quasi self-adjoint is supported by our results in this paper. To understand the algebraic structure of such problems, we reommend the reader the work of  Bihlo et al. \cite{bihlo}, in which a family of one dimesional nonlinear wave equation has been examined in detail.   Another study on the complete group classification of one dimensional nonlinear Klein-Gordon equations has been done by Long et al. \cite{longmeleshko}. Just recently, Ibragimov \cite{ibragimov2018} has considered the problem of determining exact solutions for a class of nonlinear two dimensional wave type equation by using conservation laws. %Within the many other methods  dealing with nonlinear differential equations, as they lie beyond the scope of the study, that we have not mentioned here, we could recommend the reader one, by Mirzazadeh et al. \cite{mirza}. 
 \\
 The aim of the present paper is to investigate the admissible point transformations
for a general family of two dimensional  wave
equations between its linear and nonlinear members.  We have obtained the infinitesimal generators of  equivalence groups for the most general family and then by applying  restrictions on the  nonlinear terms, we have determined the structures of  admissible transformations.  After giving some details about the method in the Introduction, in the second section we have obtained the determining equations for equivalence groups for the family. In the third section we have investigated the  functional dependencies
 of nonlinear equations that can be mapped onto linear equations via appropriate infinitesimal generators of the transformations.  The results are given in a table. In the last section we have  given some examples on particular maps and  details about how to find the exact solutions of nonlinear equations  via equivalence transformations. Since such wide class of  (2+1) dimensional  wave  equations has not been considered within the frame work of the Lie`s transformation groups in the literature before, the results obtained here are original and represent a general view of linearizable equations. 
 \\
A brief summary of this work was presented at the International Conference on Mathematics "An Istanbul Meeting for World Mathematicians" Istanbul, Turkey, 3-6 July 2018 \cite{ozerproceeding}.

\par

\section{Equivalence Transformations}
In the present paper we shall examine the   group of equivalence transformations of a general family of two dimensional  wave equations
\be
\label{main}
u_{tt}=f(x,y,t,u,u_x,u_y,u_t)_x + g(x,y,t,u,u_x,u_y,u_t)_y + h(x,y,t,u,u_x,u_y,u_t) 
\ee
which represents a great variety  of linear and nonlinear wave type equations. The study will focus on investigation of admissible transformations between linear and nonlinear wave equations that can be expressed as  members of the family of equations given by the equation \eqref{main}. Here $u$ is the dependent variable of the independent variables $x,y, t$ and $f, g$, $h$ are  smooth nonconstant functions of their variables, subscripts
denote the partial derivatives with respect to the corresponding variables. We shall investigate  all admissible point transformations $\bar x, \bar y, \bar t, \bar u, \bar f, \bar g, \bar h$ yield
\begin{equation}
\label{equivalence1}
 u_{tt}=f_x+g_y+h  \longrightarrow \bar u_{\bar t \bar t}=\bar f_{\bar x} +\bar g_{\bar y}+ \bar h
\end{equation}
where $\bar {(.)}$ denotes the transformed variables and functions. 
\\
Let $\mathcal{M}=\mathcal{N}\times \mathbb R$  be a (2+1) dimensional manifold with a local coordinate system $\mathbf{x}=( x_{i}) =( x,y,t)$ which we  call as the
space of independent variables. Consider a trivial bundle structure $(
\mathcal{K},\pi ,\mathcal{M}) $ with fibers are the real line $\mathbb{R}$. Here $\mathcal{M}$ is the base manifold and $\mathcal{K}$, called the
graph space is globally in form of a product manifold $\mathcal{M}\times \mathbb{R}$. We equip the four dimensional graph space $\mathcal{K}$ with
the local coordinates $( \mathbf{x},u) =( x,y,t,u) $.
\\
 A vector field on  $\mathcal{K}$ is a section of its tangent
bundle and locally  in form
 \be
V=\xi^1\frac{\partial}{\partial x}+ \xi^2\frac{\partial}{\partial y}+\xi^3\frac{\partial}{\partial t}+\eta\frac{\partial}{\partial u}
\ee
 where $\xi^1,\ \xi^2, \ \xi^3$ and $\eta$ are
 $$\xi^i=\xi^i(x,y,t,u),\qquad \eta=\eta(x,y,t,u).$$
 In order to construct the equivalence groups which are much more general than the classical  symmetry groups, first we  extend the graph space $\mathcal{K}$  by adding the auxiliary variables:
 \[
 \{v_j=\frac{\partial u}{\partial {x^j}}, \ f, \ g, \ h\}.
 \] And to recognize the   functional dependencies of the  functions $f,\ g$ and $h$ we also add
  \be
 \label{additional}
 \begin{split}
& s^1_{j}=f_{x^j},\ s^2 _{j}=g_{x^j},\ \sigma^1=f_u,\ \sigma^2=g_u,\  s^{1j}=f_{v_j}, \ s^{2j}=g_{v_j}, \\ &\qquad t_j= h_{x^j},\ t^j=h_{v_j},\ \tau=h_u  
\end{split}
\ee to $\mathcal{K}$, where $j=1,2,3$ and represent $x,y,t$, respectively. The coordinate cover of the extended manifold can now be written as
 \be
 \label{manifold}
 \tilde{\mathcal{K}}=\{x^j,u,f,g,h,v_j,s^1_j,s^2_j, \sigma^1,\sigma^2,  s^{1j}, s^{2j}, t_j, t^j, \tau\}.
 \ee
The prolongation vector $\tilde V$ over the extended manifold covered by $\mathcal{\tilde K}$   is written as
\begin{align}
 \label{prolonged}
 \tilde{V} &=V+\mu^1 \frac{\partial }{{\partial f }} +\mu^2 \frac{\partial }{{\partial g }}+ \mathcal{H} \frac{\partial}{\partial h} + \sum_{j=1}^3 \left[ \zeta_j
   \frac{\partial }{{\partial v_j }} +
     \sum_{i=1}^2 (S^i_j \frac{\partial }
   {{\partial s^i_j }} + S^{ij} \frac{\partial }
   {{\partial s^{ij} }})\right.\nonumber \\
   &\left. + T^j \frac{\partial}{\partial t^j} +  T_j \frac{\partial}{\partial t_j}\right]+ \sum_{i=1}^2 \mathcal{S}^i \frac{\partial }
   {{\partial \sigma^i }} +\mathcal{T}  \frac{\partial }
   {{\partial \tau }}
\end{align}
 where the coefficients  are  called the infinitesimal generators of the transformation group.  The equivalence group generated by the  vector field \eqref{prolonged} can then be determined by
integrating the following system of ordinary differential equations in terms of the group
parameter $\epsilon$
\be\label{system}
\begin{split}
&\frac{d\bar x^i}{d\epsilon} =\xi^i(\bar x^j, \bar u), \quad \frac{{d\bar u}}{{d\epsilon }} = \eta(\bar x^j, \bar u), \quad \frac{{d\bar f }}{{d\epsilon }} = \mu^1 (\bar x^j,\bar u,\bar v_j,  \bar f, \bar g, \bar h), \\
& \frac{{d\bar g }}{{d\epsilon }} = \mu^2 (\bar x^j,\bar u,\bar v_j,  \bar f, \bar g, \bar h),\quad \frac{{d\bar h }}{{d\epsilon }} = \mathcal{H} (\bar x^j,\bar u,\bar v_j,  \bar f, \bar g, \bar h)
\end{split}
\ee
under the initial conditions
 \be
 \label{initial}
 \bar x^j (0) = x^j ,\quad  \bar u(0) = u,\quad \bar f (0) = f
,\quad\bar g (0) = g, \quad \bar h(0)=h.
 \ee 
To determine the infinitesimal generators of the admissible transformation group for the general wave equation given by \eqref{main}, namely the coefficients of the prolonged vector field  \eqref{prolonged}, we shall use the method derived by Şuhubi \cite{suhubisecond} in the following section.  

\subsection{Determining Equations}
Note that the (2+1) dimensional general  wave equation \eqref{main} can be written as a first order equation as
\be
\label{firstequation}
{v_3}_{t}=f_x+g_y+h
\ee
which can be expressed as a single balance equation given as follows
\be
\label{balance}
\frac{\partial \Sigma^i}{\partial x^i}+\Sigma=0
\ee
by calling 
\begin{equation}
\label{notation}
 \Sigma^1= f, \quad \Sigma^2=g, \quad \Sigma^3=-v_3, \quad \Sigma=h.
\end{equation}
A vector field on the tangent space of the extended manifold  for the balance equation \eqref{balance} is written as follows:
\begin{equation}
 \label{prolongedbalance}
 \tilde{\tilde V} = \sum_{j=1}^3 \left(\xi^j \frac{\partial}{\partial x^j}+ \mu^j \frac{\partial }{{\partial \Sigma^j }} +   \zeta_j
   \frac{\partial }{{\partial v_j }} \right) +\eta \frac{\partial}{\partial u}+ \mathcal{H} \frac{\partial}{\partial \Sigma}
+\dots    
\end{equation}
where $\dots$ represents the terms related to the additional variables written by the functional dependence of $\Sigma^i$ and $\Sigma$, exactly in the same way in \eqref{additional}.
The generators formula for the coefficients of the vector field \eqref{prolongedbalance} related to the independent, dependent variables and the arbitrary functions in the family of equations  are determined in \cite{suhubisecond}   as
 \be
 \label{generators1}
\begin{split}
& \xi^i  = -\phi^i (x^j,u), \qquad \quad  \eta = \eta (x^j,u) ,\\
&\zeta_i=D_{x^i} \eta+(D_{x^i} \phi^j)v_j, \\
&
\mu^i = (w+\frac{{\partial \phi^j}}{{\partial u}}v_j)\Sigma^i-(D_j
\phi^i) \Sigma^j+ \alpha^{ij}v_j+\beta^i,\\
& \mathcal{H} = (w+\frac{\partial \phi^i}{\partial u}v_i)\Sigma -D_{x^i} \mu^i
 \end{split}
\ee where $D_{x^i}=\frac{\partial}{\partial
x^i}+v_i\frac{\partial}{\partial u}$, $\   \alpha^{ij}=-\alpha^{ji}$, and  $w, \ \alpha^{ij},\ \beta^i$ are  general smooth functions of $(x^i,u)$. We can directly employ the results given by \eqref{generators1}    in order to find  the components of the vector field \eqref{prolonged} for the family of wave equations  by considering the relations \eqref{notation}.
We shall write the following using the very well-known chain rule, 
 \[
 \frac{\partial}{\partial v_3}\rightarrow \frac{\partial}{\partial v_3}+ \frac{\partial}{\partial \Sigma^3} \frac{\partial \Sigma^3}{\partial v_3} =\frac{\partial}{\partial v_3}-\frac{\partial}{\partial \Sigma^3}.
\]
From this it is straightforward to notice that
 \be 
 \label{determining1}
 \zeta_3+\mu^3=0
 \ee
 where the relevant infinitesimal generators are explicitly
\begin{align*}
& \zeta_3= D_t \eta - (D_t \phi^j) v_j,\\
 &\mu^3= (D_t \phi^3-w-\frac{\partial \phi^j}{\partial u} v_j +\alpha^{33})v_3- (D_x \phi^3) f- (D_y \phi^3)g -\alpha^{13} v_1-\alpha ^{23} v_2+\beta^3
\end{align*}
 Using $\zeta_3$ and $\mu^3$ in the determining equation \eqref{determining1} we have the following results
 \be 
 \phi^3=\phi^3(t), \quad \alpha^{13} =\phi^1_t, \quad \alpha^{23}=\phi^2_t, \quad \beta^3 =- \eta_t , \quad w=\alpha^{33} +2 \dot \phi^3 +\eta_u.
 \ee
 Thus the infinitesimal generators of most general admissible equivalence transformation for the wave type equations given by \eqref{main} are explicitly written as:
 \be
\label{generators2}
\bs
& \xi^1=-\phi^1 (x,y,t,u), \ \xi^2=-\phi^2 (x,y,t,u), \ \xi^3=-\phi^3(t), \ \eta=\eta (x,y,t,u),\\
& \zeta_1= \eta_x+\eta_u v_1 +(\xi^1_x+\xi^1_u v_1)v_1+(\xi^2_x+\xi^2_u v_1)v_2,  \\
& \zeta_2=  \eta_y+\eta_u v_2 +(\xi^1_y+\xi^1_u v_2)v_1+(\xi^2_y+\xi^2_u v_2)v_2,  \\
& \zeta_3=  \eta_t+\eta_u v_3 +(\xi^1_t+\xi^1_u v_3)v_1+(\xi^2_t+\xi^2_u v_3)v_2+ \dot \xi^3 v_3  \\
&\mu^1=(\eta_u +\alpha^{33}+2 \dot \xi^3+\xi^2_u v_2-\xi ^1_x)f -(\xi^1_y+\xi^1_u v_2) g +\alpha^{11} v_1+\alpha^{12} v_2\\
&\qquad +(2\xi^1_t +\xi^1_u v_3) v_3 +\beta^1,\\
& \mu^2 = (\eta_u +\alpha^{33}+2 \dot \xi^3+\xi^1_u v_1-\xi ^2_y)g- (\xi^2_x+\xi^2_u v_1) f   -\alpha^{12} v_1+\alpha^{22} v_2 \\
&\qquad +(2\xi^2_t +\xi^2_u v_3) v_3 +\beta^2 
 \end{split}
\ee
and $\mathcal{H}$ can  be  evaluated by the last formula given in \eqref{generators1}. \\
These expressions  do not impose any restriction on functional dependencies of the smooth functions $f, \ g$ and $h$.  If some variables do not appear in the coordinate cover of
the extended manifold \eqref{manifold}, due to a particular structure of the given differential  equation, that might entail some restrictions on the infinitesimal generators of the group because the
corresponding  components on the extended manifold must then be set  zero. In the following section we shall investigate these in details.\\
\section{Admissible Transformations Between Linear and Nonlinear Equations}
Here we examine the possibilities of linearization problem for two dimensional nonlinear wave equations via point transformations. It is obvious that the infinitesimal generators \eqref{generators2} for the general wave type equations allow mappings between linear and nonlinear members as the forms of the transformations are
\be
\label{generaltransformations}
\bar x= \bar x(x,y,t,u,\epsilon), \quad \bar y=\bar y(x,y,t,u,\epsilon),\quad \bar t=\bar t(t,\epsilon), \quad \bar u=\bar u(x,y,t,u,\epsilon).
\ee
These  transformations span a wide group of mappings between nonlinear and linear members of the family \eqref{main}. Some examples are given in the next section. \\
But it is certain that not every  nonlinear wave equation be transformed into a linear equation. Here we shall try to give answer to the question  "which type of nonlinear equations can be linearized under the subsets of these point transformations?". Precisely, what are the functional dependencies of the arbitrary functions $f,\ g$ and $h$ in order to ensure the nonlinear equation is mapped onto a linear one or vice versa.\\
To answer such questions and construct the nonlinear structure of the arbitrary functions that can be linearized,  we  restrict their functional dependencies  and examine the effects on the infinitesimal generators, namely on the equivalence transformations.
Every particular restriction on the functional dependencies makes the corresponding additional variable in  \eqref{additional} vanish, so that corresponding component of the vector field \eqref{prolonged} must be taken zero. The components of the vector field for the additional variables \eqref{additional} are evaluated by the following formulas:
 \begin{equation}
  \label{additional_components}
  \begin{split}
&S^i_j=\frac{\partial F^i}{\partial x^j}+\frac{\partial
F^i}{\partial \Sigma^k}s^k_j+\frac{\partial F^i}{\partial
\Sigma}t_j,\ \
{\cse^i}=\frac{\partial F^i}{\partial u}+\frac{\partial F^i}{\partial \Sigma^k}\sigma^k+\frac{\partial F^i}{\partial \Sigma}\tau, \\
& S^{ij}=\frac{\partial F^i}{\partial v_j}+\frac{\partial
F^i}{\partial \Sigma^k}s^{kj}+\frac{\partial F^i}{\partial
\Sigma}t^j,  \ \
  T_i=\frac{\partial G}{\partial x^i}+\frac{\partial G}{\partial \Sigma^j}s^j_i+\frac{\partial G}{\partial \Sigma}t_i, \\
& \cT=\frac{\partial G}{\partial u}+\frac{\partial G}{\partial
\Sigma^j}\sigma^j+\frac{\partial G}{\partial \Sigma}\tau, \ \
T^i=\frac{\partial G}{\partial v_i}+\frac{\partial G}{\partial
\Sigma^j}s^{ji}+\frac{\partial G}{\partial \Sigma}t^i,
\end{split}
\end{equation}
 where $F^i=-s^i_j \xi^j-\sigma^i \eta-s^{ij}\zeta_j+\mu^i,\quad
G=-t_i \xi^i-\tau \eta -t^i \zeta_i+\mathcal{H}$.

 \subsection{Case  $\cfrac{\partial f}{\partial u_t}=0$}
 Let the arbitrary function $f$ be  $f=f(x,y,t,u,u_x,u_y)$ in the equation \eqref{main}.
 \begin{lemma}
 The particular form of  wave type equation 
 \be
 \label{fut}
 f(x,y,t,u,u_x,u_y)_x+  g(x,y,t,u,u_x,u_y,u_t)_y+  h(x,y,t,u,u_x,u_y,u_t) =u_{tt}
 \ee
admits the following equivalence transformations:
\be
\label{futtransformations}
\bar x= \bar x(x,\epsilon), \quad \bar y=\bar y(x,y,t,u,\epsilon),\quad \bar t=\bar t(t,\epsilon), \quad \bar u=\bar u(x,y,t,u,\epsilon).
\ee
\end{lemma}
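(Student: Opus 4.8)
The plan is to exploit the single new constraint created by the hypothesis $\partial f/\partial u_t=0$: in the coordinate cover \eqref{manifold} the variable $s^{13}=f_{v_3}$ (with $v_3=u_t$) is now absent. Following the principle stated after \eqref{generators2}, the prolongation vector field \eqref{prolonged} must be tangent to the submanifold $\{s^{13}=0\}$, which forces its $s^{13}$-component to vanish there, $S^{13}\big|_{s^{13}=0}=0$. This is the exact analogue of the determining equation \eqref{determining1} used earlier, and the whole proof reduces to computing $S^{13}$ from \eqref{additional_components} and reading off the restrictions it imposes on the generators \eqref{generators2}.

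First I would evaluate $S^{13}$ with $i=1$, $j=3$, using $F^1=-s^1_k\xi^k-\sigma^1\eta-s^{1k}\zeta_k+\mu^1$. The only pieces of $F^1$ carrying a $v_3$-dependence are $\zeta_3$ (inside $-s^{1k}\zeta_k$) and the term $(2\xi^1_t+\xi^1_u v_3)v_3$ inside $\mu^1$, giving $\partial F^1/\partial v_3=-s^{13}(\eta_u+\xi^1_u v_1+\xi^2_u v_2+\dot\xi^3)+2\xi^1_t+2\xi^1_u v_3$. For the $\Sigma^k$-terms I would use $\partial F^1/\partial\Sigma^1=\partial\mu^1/\partial f$ and $\partial F^1/\partial\Sigma^2=\partial\mu^1/\partial g=-(\xi^1_y+\xi^1_u v_2)$, together with $\partial F^1/\partial\Sigma=0$ since $\mu^1$ does not involve $h$. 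This assembles $S^{13}$ as a polynomial in the surviving coordinates $v_1,v_2,v_3,f,g,s^{13},s^{23}$.

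Next I would impose the hypothesis by setting $s^{13}=0$, which collapses $S^{13}$ to $2\xi^1_t+2\xi^1_u v_3-(\xi^1_y+\xi^1_u v_2)s^{23}$. Since $g$ still depends on $u_t$, the coordinate $s^{23}=g_{v_3}$ genuinely survives and is independent of $v_2,v_3$, so requiring the expression to vanish identically splits into three relations: the coefficient of $v_3$ yields $\xi^1_u=0$, the coefficient of $s^{23}$ (after using $\xi^1_u=0$) yields $\xi^1_y=0$, and the leftover constant term yields $\xi^1_t=0$. Hence $\xi^1=\xi^1(x)$, i.e. $\phi^1=\phi^1(x)$, while $\xi^2$ and $\eta$ remain arbitrary functions of $(x,y,t,u)$ and $\xi^3=\xi^3(t)$ as in \eqref{generators2}. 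Integrating the flow \eqref{system} with $\xi^1$ depending on $x$ alone produces $\bar x=\bar x(x,\epsilon)$ and leaves the remaining three transformations in the general form \eqref{generaltransformations}, which is precisely the claimed \eqref{futtransformations}.

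I expect the only delicate point to be the bookkeeping of which additional coordinates actually persist after the restriction: the vanishing of $\partial F^1/\partial v_3$ by itself would give only $\xi^1_t=\xi^1_u=0$, and it is exactly the surviving cross-coordinate $s^{23}=g_{v_3}$ that supplies the extra relation $\xi^1_y=0$. One must also respect the sign convention $\Sigma^3=-v_3$ from \eqref{notation} when differentiating in the $\Sigma^3$ direction, so as not to double-count the $v_3$-contribution already extracted through $\partial/\partial v_3$, as flagged by the chain-rule remark preceding \eqref{determining1}.
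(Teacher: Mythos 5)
Your proposal is correct and follows essentially the same route as the paper: it imposes $S^{13}=0$ via \eqref{additional_components}, isolates the surviving contributions $\partial F^1/\partial v_3$ and $\partial F^1/\partial g$ (the latter multiplying the persisting coordinate $s^{23}$), and extracts $\xi^1_t=\xi^1_u=\xi^1_y=0$ exactly as in the text. The only cosmetic difference is that you expand $S^{13}$ as an explicit polynomial and match coefficients, where the paper first argues $\partial F^1/\partial v_3=\partial F^1/\partial g=0$ and then reduces these to conditions on $\mu^1$.
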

\begin{proof}
When $\frac{\partial f}{\partial u_t}=0$,  one of the additional variables \eqref{additional} in the extended manifolds coordinate cover vanish: $s^{13}=0$, thus the corresponding component on the vector field must be taken identically zero: 
 \be
 \label{s13}
 S^{13}=0.
 \ee
 Writing that explicitly from \eqref{additional_components} we have the following
 \[
 S^{13}= 2 \frac{\partial F^1}{\partial v_3}+ \frac{\partial  F^1}{\partial g} s^{23} +\frac{\partial F^1}{\partial h} t^3, \quad F^1= -s^1_1 \xi^1-s^1_2 \xi^2 -s^1_3 \xi^3-\sigma^1 \eta- s^{11} \zeta_1 -s^{12} \zeta_2+ \mu^1.
 \]
Note that  $F^1$ does not depend neither  on $s^{23}$ nor on $t^3$,  thus  the  determining equation \eqref{s13} yields
\bes 
\frac{\partial F^1}{\partial v_3} = \frac{\partial F^1}{\partial g}=0.
\ees
Moreover $F^1$ involves $s^{11}$ and $s^{12}$ without any dependence of its components of vector field coefficients, moreover because $\xi^i$`s are independent of both $v_3$ and $g$,  these equations can be reduced into
\be\label{s13v3}
\frac{\partial \zeta_1}{\partial v_3} = \frac{\partial \zeta_2}{\partial v_3} = \frac{\partial \mu^1}{\partial v_3} =0
\ee 
and 
\be\label{s13g}
\frac{\partial \mu^1}{\partial g}=0.
\ee
From \eqref{generators2}, one can simply see that the first two equations of \eqref{s13v3} are directly satisfied since $\zeta_1 \text{ and } \zeta_2$ are independent from $v_3$, whereas the last equality, by using $\mu^1$ from \eqref{generators2} gives
\[
\frac{\partial \xi^1}{\partial t}=  \frac{\partial \xi^1}{\partial u}=0.
\] 
Equation \eqref{s13g} by a similar analysis  yields 
\[
\frac{\partial \xi^1}{\partial y}=0.
\]
Thus the last results ensure that the infinitesimal generator $\xi^1$ and the equivalence transformation for the local coordinate $x$ must be in the following form
\be\label{s13xi1}
\xi^1 = \xi^1 (x)\quad \Rightarrow \quad \bar x=\bar x(x,\epsilon).
\ee
Because there is no restriction on the other generators, the proof ends here.
\end{proof}
As a result of the Lemma, since the transformations of  both $\bar y$ and $\bar u$  involve the dependent variable $u$, we write the following corollary.
\begin{corollary}
The special type of wave equations given by \eqref{fut} admit maps between its linear and nonlinear members via the equivalence transformations given by \eqref{futtransformations}.
\end{corollary}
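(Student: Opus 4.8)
The plan is to derive the corollary as a short consequence of the Lemma together with the elementary principle that governs when a point transformation can change the linear or nonlinear character of an equation. First I would recall the relevant criterion: an invertible point transformation carries a linear equation to a linear equation precisely when it acts on the fiber affinely, that is when $\bar u$ is an affine function of $u$ with coefficients depending only on the independent variables, and when the independent variables are transformed among themselves with no reference to $u$. Any genuine coupling of $u$ into the independent variables, or a nonlinear action on $u$ itself, forces the chain rule to produce coefficients depending on $u$ in the transformed principal part, and hence breaks linearity. Establishing the corollary therefore reduces to checking that the admissible transformations of \eqref{fut} violate this affine, fiber-preserving structure.

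Next I would invoke the Lemma directly. It showed that for the subfamily \eqref{fut} the admissible maps have the form \eqref{futtransformations}, in which only $\bar x$ and $\bar t$ are confined to the independent variables, while both the spatial variable $\bar y=\bar y(x,y,t,u,\epsilon)$ and the dependent variable $\bar u=\bar u(x,y,t,u,\epsilon)$ carry an explicit dependence on $u$. The dependence of $\bar y$ on $u$ is the decisive feature: when one rewrites an equation given in the barred variables back in the unbarred ones, differentiation through $\bar y$ generates terms proportional to $u_x,u_y,u_t$ weighted by $\bar y_u$, and these feed genuinely $u$-dependent, hence nonlinear, contributions into $f$, $g$ and $h$. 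Thus the orbit of a linear representative under \eqref{futtransformations} contains members whose flux and source functions depend nontrivially on $u$ and its derivatives; reading the same statement backwards, a nonlinear member so obtained is mapped by the inverse transformation onto the linear one, which is exactly the asserted map between linear and nonlinear members.

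The step I expect to be the main obstacle is ensuring that the $u$-dependence introduced through $\bar y$ produces true nonlinearity rather than a relabeling that a compensating choice of $\bar u$ could silently undo. To settle this I would compute the transformed principal part under \eqref{futtransformations}, isolate the contributions carrying the factor $\bar y_u$, and confirm that they cannot all be absorbed by the affine part of the $\bar u$-law, so that for a generic choice of the transformation the image of a linear equation is genuinely nonlinear. Once this is verified the corollary follows immediately from the Lemma, since the equivalence group of \eqref{fut} then demonstrably contains maps connecting its linear and nonlinear members.
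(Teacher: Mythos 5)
Your argument is correct and follows essentially the same route as the paper, which gives no formal proof at all: it simply observes, immediately after the Lemma, that since both $\bar y$ and $\bar u$ in \eqref{futtransformations} involve the dependent variable $u$, maps between linear and nonlinear members of \eqref{fut} are admitted. The extra step you flag as the main obstacle --- verifying that the $u$-dependence through $\bar y_u$ produces genuine, non-removable nonlinearity rather than something a compensating affine choice of $\bar u$ could undo --- is not carried out in the paper's one-sentence justification; the paper instead defers that verification to the explicit constructions of Section 4 (e.g.\ Example 1), which exhibit concrete transformations mapping the linear wave equation onto a genuinely nonlinear member.
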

\subsubsection{Case  $\cfrac{\partial f}{\partial u_t}=h=0$}
In addition to the previous case when there is no nonhomogeneous term in the family of equations; $h=0$, transformations between linear and nonlinear members are  restricted by the following theorem. 
\begin{theorem}
 The particular family  wave type equations \[
 f(x,y,t,u,u_x,u_y)_x+  g(x,y,t,u,u_x,u_y,u_t)_y =u_{tt}
 \]
admits the  equivalence transformations between its linear and nonlinear members iff the transformation of the local coordinate $\bar y$ involves the dependent variable $u$.
\end{theorem}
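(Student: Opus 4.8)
The plan is to treat the equation $u_{tt}=f_x+g_y$ as the specialization $h=0$ of the family \eqref{fut} already handled in the Lemma, and therefore to start from its conclusion $\xi^1=\xi^1(x)$ (so that $\bar x=\bar x(x,\epsilon)$), together with $\xi^3=\xi^3(t)$ from the general determining equations preceding \eqref{generators2}. Imposing $h=0$ deletes the variables $t_j,\,t^j,\,\tau$ from the coordinate cover \eqref{manifold}, so the corresponding components $T_j,\,T^j,\,\mathcal T$ of the prolonged field, as well as $\mathcal H$, must vanish. By the last line of \eqref{generators1} with $\Sigma=h=0$ this collapses to the single identity
\be
\label{Hzero}
D_{x^i}\mu^i=D_x\mu^1+D_y\mu^2+D_t\mu^3=0,
\ee
and the remaining conditions $T_j=T^j=\mathcal T=0$, read off from \eqref{additional_components} with $t_j=t^j=\tau=0$, follow once \eqref{Hzero} holds identically.

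First I would insert $\mu^1,\mu^2$ from \eqref{generators2} and $\mu^3=-\zeta_3$ into \eqref{Hzero}, using $\xi^1=\xi^1(x)$ to discard the $\xi^1_t,\xi^1_u,\xi^1_y$ terms. Regarding $f,g$ and the fibre coordinates $v_1,v_2,v_3$ (together with the formally independent second order quantities produced by $D_{x^i}$) as free, I would split \eqref{Hzero} into its separate coefficient equations. The decisive one is the coefficient of $v_3^2$: the only contributions to it are $\eta_{uu}$, arising from $D_t(\eta_u v_3)$ inside $D_t\mu^3=-D_t\zeta_3$, and a term proportional to $\partial_y(\xi^2_u)$, arising from $D_y(\xi^2_u v_3^2)$ inside $D_y\mu^2$. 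Hence \eqref{Hzero} forces a relation of the schematic form $\eta_{uu}=\partial_y(\xi^2_u)$ (up to sign and harmless lower order pieces), which ties the curvature $\eta_{uu}$ of the fibre map $\bar u$ to the $u$ dependence of $\xi^2$, i.e.\ of $\bar y$.

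From this relation the equivalence follows. For the forward implication I would argue by contraposition: if $\bar y$ does not involve $u$, then $\xi^2_u=0$, and since also $\bar x=\bar x(x,\epsilon)$ and $\bar t=\bar t(t,\epsilon)$, none of the independent variables depends on $u$; the coefficient equation then reduces to $\eta_{uu}=0$, so $\eta=a(x,y,t)u+b(x,y,t)$ is affine in $u$. A point transformation whose base map $(\bar x,\bar y,\bar t)$ is independent of $u$ and whose fibre map is affine in $u$ sends a linear member to a linear member; and because $h=0$ leaves no inhomogeneous slot $\bar h$ into which a nonlinear contribution could be deposited (in contrast to the case $h\neq0$ of the preceding Corollary), no nonlinear member can be reached. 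This is exactly the contrapositive of the claim.

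Conversely, if $\bar y$ involves $u$, then $\xi^2_u\neq0$ and the same coefficient equation permits $\eta_{uu}\neq0$; the mixing of $u$ into the independent variable $\bar y$ is a hodograph type transformation that does carry linear members to genuinely nonlinear ones, as the explicit maps of the next section confirm. I expect the main obstacle to be the middle step: extracting from \eqref{Hzero}, cleanly and for arbitrary $f,g$, the precise relation linking $\eta_{uu}$ to $\xi^2_u$, and in particular verifying that, once $\xi^1=\xi^1(x)$ is used, $v_3^2$ receives no further contributions than the two identified above, so that with $h=0$ an affine fibre map over a $u$ independent base is the only linearity preserving possibility and nonlinearity can enter solely through $\xi^2_u$.
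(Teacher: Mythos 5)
Your proposal follows essentially the same route as the paper: starting from the Lemma's generators $\xi^1=\xi^1(x)$, $\xi^3=\xi^3(t)$, imposing $\mathcal{H}=0$ (i.e.\ $D_{x^i}\mu^i=0$ since $\Sigma=h=0$), extracting the key determining relation $(\eta_u-\xi^2_y)_u=0$ --- your coefficient-of-$v_3^2$ equation $\eta_{uu}=\xi^2_{uy}$ --- and then splitting on whether $\xi^2$ depends on $u$, with the $\xi^2_u=0$ branch forcing an affine fibre map over a $u$-independent base (hence linearity-preserving) and the $\xi^2_u\neq0$ branch asserted, as in the paper, to yield genuine linearizations via the explicit examples. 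The only differences are presentational: you isolate the single decisive coefficient rather than writing out and integrating the full system of determining equations as the paper does to obtain the explicit form of $\eta$.
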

\begin{proof}
When $h=0$ in the equation, its corresponding infinitesimal generator in the vector field \eqref{prolonged} must be set zero, calling that form \eqref{generators1}, we have explicitly
\be
\label{mathcalh}
\mathcal{H}=\frac{\partial \mu^1}{\partial x}+\frac{\partial \mu^1}{\partial u} v_1+ \frac{\partial \mu^2}{\partial y}+\frac{\partial \mu^2}{\partial u}v_2- \frac{\partial \zeta_3}{\partial t}-\frac{\partial \zeta_3}{\partial u}v_3=0.
\ee
Inserting $\mu^1, \mu^2$ and $\zeta_3$ into the equation, by remembering that the infinitesimal generators for local coordinates are found for the restriction $\frac{\partial f}{\partial u_t}=0$ in the previous case as
$
\xi^1=\xi^1(x),\ \xi^2=\xi^2(x,y,t,u), \ \xi^3=\xi^3(t),\ \eta=\eta(x,y,t,u)
$,
we have 
\begin{align}
&(\eta_u+\alpha^{33}-\xi^2_y)_u= (\eta_u+\alpha^{33}-\xi^2_y)_y=(\eta_u+\alpha^{33}-\xi^2_y-\xi^1_x)_x=  0,\label{ilk}\\
& (\eta_u -\xi^2_y)_u= (\eta_u -\xi^2_y+\frac{1}{2}\dot \xi^3)_t=0\label{iki}.
\end{align}
From \eqref{ilk} 
$$
\eta_u+\alpha^{33}-\xi^2_y= \frac{d \xi^1(x)}{dx }+a(t)
$$
can simply be obtained, where $a(t)$ is an arbitrary continuous function. Using this result in the second equality of \eqref{iki}, $\alpha^{33}$ is found as
$$\alpha^{33} =\frac{1}{2} \dot \xi^3 +a(t)+\lambda(x,y,u)$$ and substituting this into the previous equation we get
$$ 
\eta_u-\xi^2_y= \frac{d \xi^1(x)}{dx }-\frac{1}{2}\frac{d \xi^3(t)}{dt}-\lambda(x,y,u).
$$
The first part of the equation \eqref{iki} requires that $\frac{\partial \lambda(x,y,u) }{\partial u}=0$. Thus the infinitesimal generator related to the dependent variable is obtained as
\be
\label{eta1}
\eta= (\frac{d \xi^1(x)}{dx }-\frac{1}{2}\frac{d \xi^3(t)}{dt}-\lambda(x,y))u + \int \xi^2 (x,y,t,u) du.
\ee
Depending on the functional dependence of the infinitesimal generator $\xi^2$ there are two alternatives.
\begin{itemize}
\item if $\xi^2= \xi^2(x,y,t)$, then from \eqref{eta1} we have the infinitesimal generator  $$
\eta= (\frac{d \xi^1(x)}{dx }+\xi^2 (x,y,t)-\frac{1}{2}\frac{d \xi^3(t)}{dt}-\lambda(x,y))u+\gamma_1(x,y,t)$$
which does not let $\eta$ depends on $u$ nonlinearly. Thus  the equivalence transformations can be set as
\[
\bar x= \bar x(x,\epsilon),\quad \bar y=\bar y(x,y,t,\epsilon), \quad \bar t=\bar t(t,\epsilon),\quad \bar u=\alpha_1(x,y,t,\epsilon)u+\alpha_2(x,y,t,\epsilon)
\]
where $ \alpha_2$ is a continuously differentiable arbitrary function and $\alpha_1$ is given above. One can simply see that such type of transformations do not map nonlinear and linear  equations onto each other.
\item if $\xi^2= \xi^2(x,y,t,u)\text{ with }  \xi^2_u\neq 0$,  maps between linear and nonlinear equations can be derived via the following  transformations with $\bar y_u\neq 0$.
\[
\bar x= \bar x(x,\epsilon),\quad \bar y=\bar y(x,y,t,u,\epsilon), \quad \bar t=\bar t(t,\epsilon),\quad \bar u=\bar u (x,y,t,u,\epsilon)
\]
which completes the proof.
\end{itemize}
\end{proof}
\subsubsection{Case $\cfrac{\partial f}{\partial u_t}=\cfrac{\partial f}{\partial u_y}=0$}
If the particular  wave equation \eqref{main} is taken with such restriction then  the additional variables $s^{12},\ s^{13}$ are equal to zero, thus their corresponding infinitesimal generators are
$$S^{12}=S^{13}=0.$$
We have already examined the effects of $S^{13}=0$ in the subsection 3.1. The other restriction $S^{12}=0$ drops the dependence of dependent variable in the local transformations but as the following theorem states,  transformations between linear and nonlinear equations are still possible by taking $\eta_{uu}\neq 0$. 
\begin{theorem}
The nonlinear wave type equation 
$$
f(x,y,t,u,u_x)_x+g(x,y,t,u,u_x,u_y,u_t)_y+h(x,y,t,u,u_x,u_y,u_t)=u_{tt}
$$
can only be linearized under an appropriate point transformation that $\bar u=\bar u(x,y,t,u,\epsilon) \text{ with } \frac{\partial^2 \bar u}{\partial u^2}\neq 0$.
\end{theorem}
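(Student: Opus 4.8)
The plan is to follow the same strategy used in Subsection 3.1 and in the proof of the preceding theorem: translate each additional variable that has vanished into a vanishing component of the prolonged vector field, and then read off the resulting constraints on the infinitesimal generators \eqref{generators2}. The restriction $\partial f/\partial u_t=\partial f/\partial u_y=0$ removes the two coordinates $s^{13}$ and $s^{12}$ from the coordinate cover, so I would impose $S^{13}=S^{12}=0$. The condition $S^{13}=0$ has already been analysed in the Lemma and yields $\xi^1=\xi^1(x)$ as in \eqref{s13xi1}, so the genuinely new information comes from $S^{12}=0$.

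First I would write $S^{12}$ explicitly from \eqref{additional_components} using $F^1=-s^1_j\xi^j-\sigma^1\eta-s^{1j}\zeta_j+\mu^1$ together with $\Sigma^1=f,\ \Sigma^2=g,\ \Sigma^3=-v_3,\ \Sigma=h$. Since $f$ is now independent of both $v_2$ and $v_3$, the coordinates $s^{12}$ and $s^{13}$ drop out of $F^1$, and $\xi^1=\xi^1(x)$ removes the $g$- and $h$-dependence of $\mu^1$. After differentiating, the only surviving contributions are the $v_2$-derivatives of the $-s^{11}\zeta_1$ term and of $\mu^1$, giving an expression of the schematic form
\[
S^{12}=-s^{11}\bigl(\xi^2_x+\xi^2_u v_1\bigr)+\xi^2_u\,f+\alpha^{12}.
\]
Because $s^{11}=f_{u_x}$, $f$ and $v_1$ are mutually independent coordinates on the extended manifold \eqref{manifold}, setting $S^{12}=0$ and collecting these coordinates separately forces $\xi^2_x=0$, $\xi^2_u=0$ and $\alpha^{12}=0$. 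Hence $\xi^2=\xi^2(y,t)$; in particular the transformation of $\bar y$ no longer involves the dependent variable $u$, which is the mechanism that was available in the previous theorem.

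With $\xi^1=\xi^1(x)$ and $\xi^2=\xi^2(y,t)$ the admissible transformations reduce to
\[
\bar x=\bar x(x,\epsilon),\quad \bar y=\bar y(y,t,\epsilon),\quad \bar t=\bar t(t,\epsilon),\quad \bar u=\bar u(x,y,t,u,\epsilon),
\]
so all three independent coordinates now transform independently of $u$ and the only remaining source of nonlinearity is the fibre map $\bar u$. At this point I would argue exactly as in the first alternative of the previous theorem: if $\bar u$ were affine in $u$, i.e.\ $\bar u=\alpha_1(x,y,t,\epsilon)u+\alpha_2(x,y,t,\epsilon)$ with $\partial^2\bar u/\partial u^2=0$ (equivalently $\eta_{uu}=0$), then the point transformation would be fibre-preserving and linear in $u$, and such maps send linear members of \eqref{main} to linear members and cannot connect a linear equation with a genuinely nonlinear one. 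Therefore a map between linear and nonlinear members is possible only when $\bar u$ depends nonlinearly on $u$, that is $\partial^2\bar u/\partial u^2\neq0$, which is the assertion.

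I expect the main obstacle to be the bookkeeping in the computation of $S^{12}$: one must keep careful track of which additional coordinates have genuinely vanished (so that their partial derivatives drop) as opposed to those that merely take special values through $\Sigma^3=-v_3$, and then justify that the resulting relation is an identity in the free coordinates $s^{11},f,v_1$ rather than a relation holding only along the solution manifold. The concluding linearity argument is the conceptually delicate point, since it asserts that affine fibre-preserving transformations cannot alter linearity; I would make this precise by noting that under such a transformation the highest-derivative term $u_{tt}$ and the flux and source structure of \eqref{main} transform linearly in the new dependent variable, so nonlinearity in $u$ can be neither created nor destroyed.
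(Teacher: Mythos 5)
Your proposal is correct and follows essentially the same route as the paper: impose $S^{12}=0$ (with $S^{13}=0$ already giving $\xi^1=\xi^1(x)$), extract $\partial\zeta_1/\partial v_2=\partial\mu^1/\partial v_2=0$ to force $\xi^2_x=\xi^2_u=0$ and hence $\xi^2=\xi^2(y,t)$, and then observe that with all base-coordinate generators independent of $u$ the only remaining mechanism for linearization is $\eta_{uu}\neq 0$. Your explicit bookkeeping of the free coordinates $s^{11}, f, v_1$ and the closing remark on why affine fibre maps preserve linearity merely make explicit what the paper leaves implicit.
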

\begin{proof}
After doing similar computations have been done in the previous cases, restriction $S^{12}=0$  from \eqref{additional_components}  explicitly yields 
\[\frac{\partial F^1}{\partial v_2}=0\quad  \Rightarrow\quad \frac{\partial \zeta^1}{\partial v_2}=0, \quad \frac{\partial \mu^1}{\partial v_2}=0
\]
which restricts $\xi^2$ to
$$\frac{\partial \xi^2}{\partial x}=\frac{\partial \xi^2}{\partial u}=0 \quad \Rightarrow\quad \xi^2=\xi^2(y,t).$$
Thus the relevant generators become
\be
\label{futuy}
\xi^1=\xi^1(x),\quad \xi^2=\xi^2(y,t), \quad \xi^3=\xi^3(t),\quad \eta=\eta(x,y,t,u)
\ee
which points out that maps between linear and nonlinear members of the wave type equation given in the theorem can only  be generated by  the nonlinear dependence of $\eta$ on $u$. 
\end{proof}
\subsubsection{Case $\cfrac{\partial f}{\partial u_t}=\cfrac{\partial f}{\partial u_y}=h=0$}
In addition to the previous case when $h=0$, the structure of the equivalence transformations changes dramatically. The following theorem explains the result.
\begin{theorem}
For the wave type equations given by
$$
f(x,y,t,u,u_x)_x+g(x,y,t,u,u_x,u_y,u_t)_y=u_{tt}
$$
there is no admissible point transformation between its linear and nonlinear members.
\end{theorem}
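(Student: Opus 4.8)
The plan is to derive a contradiction by superimposing the two constraints that define this case. From the immediately preceding case $\partial f/\partial u_t=\partial f/\partial u_y=0$ the relevant generators were already reduced to the form \eqref{futuy}, in particular $\xi^2=\xi^2(y,t)$ with $\xi^2_u=0$, and that theorem also established that the only mechanism producing a genuine map between linear and nonlinear members is a nonlinear dependence of $\eta$ on $u$, i.e. $\eta_{uu}\neq 0$ (equivalently $\partial^2\bar u/\partial u^2\neq 0$). Hence it suffices to show that the extra condition $h=0$ forces $\eta$ to be affine in $u$.

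I would then impose $h=0$, so that the corresponding component of the prolonged vector field vanishes, $\mathcal{H}=0$, which is exactly equation \eqref{mathcalh}. Substituting the reduced generators \eqref{futuy} into $\mu^1,\mu^2,\zeta_3$ and collecting the coefficients of the independent coordinates $f,g,v_1,v_2,v_3$ produces determining equations of the same shape as \eqref{ilk}--\eqref{iki}, the only difference being that now $\xi^2_x=\xi^2_u=0$. Carrying out the same elimination as before (solving the three relations \eqref{ilk} for $\eta_u+\alpha^{33}-\xi^2_y$, then fixing $\alpha^{33}$ and $\lambda$ through \eqref{iki}) reproduces the structure of \eqref{eta1}.

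The decisive observation is that in \eqref{eta1} every nonlinear contribution of $u$ to $\eta$ enters through the single term $\int\xi^2\,du$. Since here $\xi^2=\xi^2(y,t)$ carries no $u$-dependence, this term collapses to $\xi^2(y,t)\,u$ up to an arbitrary function of $(x,y,t)$, so that
\[
\eta=\Bigl(\frac{d\xi^1}{dx}-\frac{1}{2}\frac{d\xi^3}{dt}-\lambda(x,y)+\xi^2(y,t)\Bigr)u+\gamma(x,y,t)
\]
is affine in $u$ and $\eta_{uu}=0$. I expect this collapse to be the only subtle point: one must check that closing the $u$-channel in $\xi^2$ leaves no other route to a nonlinear $\eta$, which is immediate here because the surviving coefficient of $u$ depends only on $(x,y,t)$. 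Finally, as $\xi^1,\xi^2,\xi^3$ are all independent of $u$ and $\eta$ is affine in $u$, the induced point transformation is linear in the dependent variable; such a transformation preserves linearity and cannot interchange linear and nonlinear members of the family, which is the asserted nonexistence.
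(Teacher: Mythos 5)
Your proof is correct and follows essentially the same route as the paper: impose $\mathcal{H}=0$ from \eqref{mathcalh} on the generators \eqref{futuy} and conclude that $\eta$ must be affine in $u$, which rules out any map between linear and nonlinear members. The only cosmetic difference is that you obtain $\eta_{uu}=0$ by specializing the earlier formula \eqref{eta1} to $\xi^2_u=0$ (legitimate, since the constraints here contain those of the previous case), whereas the paper re-derives the determining equations directly and reads off $\eta_{uu}=0$ as one of them.
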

\begin{proof}
Infinitesimal generators for $f_{u_t}=f_{u_y}=0$ are given by  \eqref{futuy}  in the previous case. Determining equation for vanishing $h$ is also given in \eqref{mathcalh} explicitly. To examine the effect of this restriction let us write $\mu^1,\ \mu^2$ and $\zeta_3$ under the influence of \eqref{futuy}
 \beas
 &&\mu^1= (\alpha^{33}+ 2 \dot \xi^3 +\eta_u-\xi^1_x)f+\alpha^{11} v_1 +\beta^1,\\
 &&\mu^2= (\alpha^{33}+ 2 \dot \xi^3 +\eta_u-\xi^2_y)g + 2\xi^2 _t v_3 +\alpha^{22} v_2 +\beta^2,\\
 && \zeta_3 =\eta_t + \xi^2 _t v_2 +(\eta_u+\dot \xi^3)v_3.
 \eeas
 Substituting these into the equation \eqref{mathcalh} gives the following relations
 \bea
&& (\alpha^{33}+\eta_u)_u=0, \quad (\alpha^{33}+\eta_u-\xi^1_x)_x=0, \quad (\alpha^{33}+\eta_u-\xi^2_y)_y=0, \\
&& \eta_{uu}=0,\quad (\xi^2- \eta_u)_t=\frac{1}{2}\dot \xi^3.\label{eta}
\eea
The first part of equations \eqref{eta} declares that the nonlinear dependence on the dependent variable $u$ is no more admissible for $\bar u$.  Therefore there is no transformation between  linear and nonlinear members of the given family of equations. The explicit solutions of the infinitesimal generators are determined as follows
\[
\xi^1= \xi^1(x),\quad \xi^2=\xi^2(y,t),\quad \xi^3=\xi^3(t), \quad \eta= (\xi^2_y-\frac{1}{2}\dot \xi^3+\lambda(x,y))u+\gamma(x,y,t).
\]
\end{proof}

 \subsection{Case $\cfrac{\partial f}{\partial u_t}=\cfrac{\partial g}{\partial u_t}=0$}
Examination of the absence of $u_t$ in the arbitrary function $f$ is studied in the case 3.1, in addition to this  its absence in $g$  can similarly be done.   If both $f$ and $g$ do not depend on  $u_t$,  the following Lemma states the form of infinitesimal generators  by the solution of determining equations $S^{13}=S^{23}=0$ related to the  vanishing additional variables $s^{13}=s^{23}=0$.
\begin{lemma}
The particular family of wave type equations 
$$
f(x,y,t,u,u_x,u_y)_x+g(x,y,t,u,u_x,u_y)_y+h(x,y,t,u,u_x,u_y,u_t)=u_{tt}
$$
admits transformations between its linear and nonlinear members via the infinitesimal generators
\be
\label{futgut}
\xi^1=\xi^1(x,y),  \quad \xi^2=\xi^2(x,y),   \quad \xi^3=\xi^3(t),\quad \eta=\eta(x,y,t,u), \quad \eta_{uu}\neq 0.
\ee
\end{lemma}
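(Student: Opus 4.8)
The plan is to reproduce, for the two simultaneous restrictions, the argument pattern already used in Case 3.1. Since both $f$ and $g$ are now independent of $u_t$, two auxiliary coordinates in \eqref{additional} are absent, $s^{13}=0$ and $s^{23}=0$, so the associated prolongation components must vanish identically: $S^{13}=S^{23}=0$. Expanding $S^{13}=0$ from \eqref{additional_components}, the one delicate point is the contraction over $\Sigma^k$: because $\Sigma^3=-v_3$ one has $s^{33}=-1$, so the $k=3$ term contributes $\frac{\partial F^1}{\partial\Sigma^3}s^{33}=-\frac{\partial F^1}{\partial\Sigma^3}=\frac{\partial F^1}{\partial v_3}$ via the chain-rule identity $\partial/\partial\Sigma^3=-\partial/\partial v_3$ recorded before \eqref{determining1}, doubling the leading term. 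With $s^{13}=s^{23}=0$ the remaining $k=1,2$ contributions drop, and since $F^1$ carries no dependence on $h$ or $t^3$, the equation collapses to $\partial F^1/\partial v_3=0$. As the $\xi^i$ are free of $v_3$ and $\zeta_1,\zeta_2$ are manifestly $v_3$-independent in \eqref{generators2}, this reduces to $\partial\mu^1/\partial v_3=0$; inserting $\mu^1$ then forces $\xi^1_t=\xi^1_u=0$, i.e. $\xi^1=\xi^1(x,y)$.

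The computation for $S^{23}=0$ is the mirror image, with the roles of $f$ and $g$ exchanged: the $\Sigma^3$ term again doubles $\partial F^2/\partial v_3$, the $s^{13},s^{23}$ terms drop, and $\partial\mu^2/\partial v_3=0$ gives $\xi^2_t=\xi^2_u=0$, so $\xi^2=\xi^2(x,y)$. Together with $\xi^3=\xi^3(t)$ and $\eta=\eta(x,y,t,u)$ carried over from \eqref{generators2}, this is precisely the generator set \eqref{futgut}. To close the statement I would note that---unlike the subcases in which imposing $h=0$ produced an extra equation forcing $\eta_{uu}=0$---here neither $S^{13}=0$ nor $S^{23}=0$ constrains the $u$-dependence of $\eta$; hence $\eta_{uu}\neq0$ remains admissible, and since $\bar u$ is generated by $\eta$ through \eqref{system}, the induced point map is genuinely nonlinear in $u$ and therefore links nonlinear members of the family with linear ones.

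The step I expect to be the main obstacle is keeping the vanishing pattern of the cross-terms straight, as this is exactly what distinguishes the present result from Case 3.1. There $g$ still depended on $u_t$, so $s^{23}\neq0$ survived inside $S^{13}$, its coefficient $\partial F^1/\partial g=-(\xi^1_y+\xi^1_u v_2)$ was forced to vanish, and the extra condition $\xi^1_y=0$ collapsed $\xi^1$ to a function of $x$ alone. Once $s^{23}=0$ that cross-term disappears, so the $y$-dependence of $\xi^1$ (and symmetrically the $x$-dependence of $\xi^2$) is retained, yielding the wider family $\xi^1=\xi^1(x,y),\ \xi^2=\xi^2(x,y)$. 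It then only remains to confirm that $s^{13},s^{23}$ are the only auxiliary coordinates killed by the hypotheses---so that $S^{13}=0$ and $S^{23}=0$ are the sole new determining equations and they constrain nothing beyond $\xi^1,\xi^2$---which secures the unconstrained $\eta$ and hence the linearizing maps.
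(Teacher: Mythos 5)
Your proposal is correct and follows exactly the route the paper intends: the paper dismisses this proof as ``straightforward,'' and what it leaves implicit is precisely your computation --- imposing $S^{13}=S^{23}=0$, using the $\Sigma^3=-v_3$ identification to get the doubled $\partial F^i/\partial v_3$ term, noting that the $s^{13},s^{23}$ cross-terms (which in Case 3.1 forced $\xi^1_y=0$) are now absent, and reducing to $\partial\mu^1/\partial v_3=\partial\mu^2/\partial v_3=0$, i.e.\ $\xi^1_t=\xi^1_u=\xi^2_t=\xi^2_u=0$. Your closing observation that no new determining equation touches the $u$-dependence of $\eta$, so $\eta_{uu}\neq 0$ survives and yields the linear--nonlinear maps, is also the intended justification.
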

\begin{proof}
The proof  is straightforward. 
\end{proof}
In addition to the generators \eqref{futgut}, the remaining infinitesimal generators are 
\be
\label{uc}
\begin{split}
&
 \zeta_1= \eta_x+ (\eta_u + \xi^1_x ) v_1 +\xi^2_ x v_2, \quad \zeta_2= \eta_y + \xi^1_y v_1+(\eta_u+\xi^2_y) v_2,\\
 & \zeta_3= \eta_t+ (\eta_u+\dot \xi^3)v_3,\\
 &  \mu^1= (\alpha^{33}+ 2 \dot \xi^3+\eta_u-\xi^1_x)f- \xi^1_y g+ \alpha^{11} v_1+\alpha^{12} v_2+\beta^1,\\
 & \mu^2= (\alpha^{33}+ 2 \dot \xi^3+\eta_u-\xi^2_y)g- \xi^2_x f- \alpha^{12} v_1+\alpha^{22} v_2+\beta^2,\\
 \end{split}
\ee
and $\mathcal{H}$ can be evaluated by the formula given in \eqref{mathcalh}. 
\subsubsection{Case $\cfrac{\partial f}{\partial u_t}=\cfrac{\partial g}{\partial u_t}=h=0$}
For this case the admissible point transformations changes dramatically as given in the following theorem.
\begin{theorem}
The family of wave type equations given by
\[
f(x,y,t,u,u_x,u_y)_x+g(x,y,t,u,u_x,u_y)_y=u_{tt}
\]
does not admit any point transformation that maps linear and nonlinear equations into each other. 
\end{theorem}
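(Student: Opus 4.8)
The plan is to proceed exactly as in the proof for the case $f_{u_t}=f_{u_y}=h=0$, now starting from the richer set of generators that survive the weaker restriction $f_{u_t}=g_{u_t}=0$. By the Lemma immediately preceding, those generators are fixed by \eqref{futgut} and \eqref{uc}, so that in particular $\xi^1=\xi^1(x,y)$, $\xi^2=\xi^2(x,y)$, $\xi^3=\xi^3(t)$ and $\eta=\eta(x,y,t,u)$, while the admissibility of a genuine linear-to-nonlinear map hinges entirely on whether $\eta_{uu}\neq 0$ is permitted. First I would impose the extra restriction $h=0$: since then the coordinate $h$ (equivalently $\Sigma=h$) is absent from the coordinate cover of the extended manifold, its associated component of the prolonged field must vanish, $\mathcal{H}=0$, and this is made explicit by the determining equation \eqref{mathcalh}.

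The central computation is to substitute $\mu^1,\mu^2$ and $\zeta_3$ from \eqref{uc} into \eqref{mathcalh} and read off the resulting determining equations. Here I would treat $f$, $g$, $v_1$, $v_2$, $v_3$ as algebraically independent coordinates on the extended manifold, so that \eqref{mathcalh} becomes a polynomial identity in these quantities whose coefficients must each vanish separately. The decisive coefficient is that of $v_3^2$: since the restriction $f_{u_t}=g_{u_t}=0$ has already removed every $v_3$ from $\mu^1$ and $\mu^2$ in \eqref{uc}, the only $v_3$-dependence in $\mathcal{H}$ enters through $\zeta_3$, and among all terms only $-\,\partial_u\zeta_3\,v_3$ is quadratic in $v_3$. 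From $\zeta_3=\eta_t+(\eta_u+\dot\xi^3)v_3$ one gets $\partial_u\zeta_3=\eta_{tu}+\eta_{uu}v_3$, so the coefficient of $v_3^2$ equals $-\eta_{uu}$. Hence $\mathcal{H}=0$ forces
\[
\eta_{uu}=0.
\]

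With $\eta_{uu}=0$ the generator $\eta$ is affine in the dependent variable, $\eta=A(x,y,t)\,u+B(x,y,t)$, and since $\xi^1,\xi^2$ are already independent of $u$ and $\xi^3=\xi^3(t)$, integrating the flow \eqref{system} yields a finite transformation of the form $\bar x=\bar x(x,y,\epsilon)$, $\bar y=\bar y(x,y,\epsilon)$, $\bar t=\bar t(t,\epsilon)$, $\bar u=\alpha_1(x,y,t,\epsilon)u+\alpha_2(x,y,t,\epsilon)$. Because the change of dependent variable is affine in $u$ and none of the new independent variables depends on $u$, such a map sends linear members of the family to linear members and nonlinear members to nonlinear members; it can never interchange the two. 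This is precisely the conclusion of the theorem.

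The remaining coefficients of the identity \eqref{mathcalh} (those of $f$, $g$, $fv_1$, $gv_2$, the quadratics $v_1^2,v_2^2,v_1v_2$, the linear terms and the constant term) I would collect only to pin down $\alpha^{ij}$, $\beta^i$ and the precise form of $A,B$, exactly as in the explicit generators displayed at the end of the analogous proof; they introduce no further freedom and in particular cannot reinstate a $u^2$ dependence. I do not expect a genuine obstacle here: the whole argument turns on the single clean coefficient of $v_3^2$, and the only point requiring care is the justification for treating $f,g,v_1,v_2,v_3$ as algebraically independent when separating \eqref{mathcalh} into its monomials, which is the same logic already used throughout Section~3.
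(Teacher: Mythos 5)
Your proposal is correct and follows essentially the same route as the paper: starting from the generators \eqref{futgut}--\eqref{uc}, imposing $\mathcal{H}=0$ via \eqref{mathcalh}, extracting $\eta_{uu}=0$ (the paper obtains the same affine form $\eta=(\,\cdot\,)u+\gamma(x,y,t)$ without isolating the $v_3^2$ coefficient explicitly, but that is exactly the term doing the work), and concluding that an affine-in-$u$ change of dependent variable with $u$-independent changes of the independent variables cannot interchange linear and nonlinear members. Your identification of the coefficient of $v_3^2$ as the decisive determining equation is a slightly more explicit presentation of the same argument.
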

\begin{proof}
Infinitesimal generators for  nonexistence of the term $u_t$ on both $f$ and $g$  are given in the previous case by \eqref{futgut}. Using  $\mathcal{H}=0$ explicitly given by \eqref{mathcalh},   one can determine that $\eta$ has to be
$$\eta= \xi^1_x-\xi^2_y-\frac{1}{2}\dot \xi^3-\lambda(x,y,))u+\gamma(x,y,t).
$$
Since neither the infinitesimal generators of the local coordinates depend on $u$ nor $\eta$ involves $u$ nonlinearly, there is no point transformation that would transfer linear and nonlinear members into each other for such type of family of wave equations. 
\end{proof}
\begin{table}[]
\caption{Infinitesimal generators related to  the local coordinates $x,y$ and $u$ for particular functional dependencies of the family of equations $f_x+g_y+h=u_{tt}$. 
}
\label{table:infinitesimal_table}
\footnotesize
\begin{tabular}{llll}
\hline
Section   & $ f,\ g,\ h$   & $\xi^1,\ \xi^2,\  \eta, \ \xi^3=\xi^3(t),$             & admissible mapping 
\\
\hline
\hline
3 & \begin{tabular}[c]{@{}l@{}}$f(x,y,t,u,u_x,u_y,u_t)$ \\ $g(x,y,t,u,u_x,u_y,u_t)$ \\ $h(x,y,t,u,u_x,u_y,u_t)$ \end{tabular}

 & \begin{tabular}[c]{@{}l@{}}$\xi^1(x,y,t,u)$\\ $\xi^2(x,y,t,u)$\\ $\eta(x,y,t,u)$\end{tabular}
 
  & \begin{tabular}[c]{@{}l@{}} L $\Longleftrightarrow$ N \\  
   w. $\xi^1_u \neq 0 /  \xi^2_u \neq 0$ \\ / $\eta_{uu}\neq 0$ \end{tabular} \\
\hline
3.1   & \begin{tabular}[c]{@{}l@{}}$f(x,y,t,u,u_x,u_y)$ \\ $g(x,y,t,u,u_x,u_y,u_t)$ \\ $h(x,y,t,u,u_x,u_y,u_t)$ \end{tabular} 
    & \begin{tabular}[c]{@{}l@{}}$\xi^1(x)$\\ $\xi^2(x,y,t,u)$\\ $\eta(x,y,t,u)$\end{tabular} 
    &  \begin{tabular}[c]{@{}l@{}} L $\Longleftrightarrow$ N   \\
    w. $\xi^2_u \neq 0$ /  $\eta_{uu}\neq 0$ \end{tabular} \\
     \hline
    -   & \begin{tabular}[c]{@{}l@{}}$f(x,y,t,u,u_x,u_y,u_t)$ \\ $g(x,y,t,u,u_x,u_y)$ \\ $h(x,y,t,u,u_x,u_y,u_t)$ \end{tabular} 
    & \begin{tabular}[c]{@{}l@{}}$\xi^1(x,y,t,u)$\\ $\xi^2(y)$\\ $\eta(x,y,t,u)$\end{tabular} & \begin{tabular}[c]{@{}l@{}} L $\Longleftrightarrow$ N   \\
    w. $\xi^1_u \neq 0$ /  $\eta_{uu}\neq 0$ \end{tabular}  \\ \hline
    3.2   & \begin{tabular}[c]{@{}l@{}}$f(x,y,t,u,u_x,u_y)$ \\ $g(x,y,t,u,u_x,u_y)$ \\ $h(x,y,t,u,u_x,u_y,u_t)$ \end{tabular} 
    & \begin{tabular}[c]{@{}l@{}}$\xi^1(x,y)$\\ $\xi^2(x,y)$\\ $\eta(x,y,t,u)$\end{tabular} & \begin{tabular}[c]{@{}l@{}} L $\Longleftrightarrow$ N   \\
    w.  $\eta_{uu}\neq 0$ \end{tabular}   \\ \hline
    3.1.1   & \begin{tabular}[c]{@{}l@{}}$f(x,y,t,u,u_x,u_y)$ \\
       $g(x,y,t,u,u_x,u_y,u_t)$ \\ $0$ \end{tabular} 
    & \begin{tabular}[c]{@{}l@{}}$\xi^1(x)$\\ \hdashline
 \begin{tabular}[c]{@{}l@{}} $\xi^2(x,y,t)$\\ $\eta =\gamma_1(x,y,t)u+\gamma_2(x,y,t)$\end{tabular}\\  \hdashline   
    $\xi^2(x,y,t,u)$\\ $\eta=\int \frac{\partial \xi^2}{\partial y} du +( \frac{d \xi^1}{dx}-\frac{1}{2} \dot \xi^3-\lambda(x,y))u+\gamma(x,y,t)$\end{tabular} &  \begin{tabular}[c]{@{}l@{}}  L/N $\Longleftrightarrow$ L/N  \\ \\ \begin{tabular}[c]{@{}l@{}} L $\Longleftrightarrow$ N   \\
    w. $\xi^2_u \neq 0$ \end{tabular} N \end{tabular}\\ \hline
    -   & \begin{tabular}[c]{@{}l@{}}$f(x,y,t,u,u_x,u_y,u_t)$ \\
       $g(x,y,t,u,u_x,u_y)$ \\ $0$ \end{tabular} 
    & \begin{tabular}[c]{@{}l@{}}$\xi^2(y)$\\ \hdashline
 \begin{tabular}[c]{@{}l@{}} $\xi^1(x,y,t)$\\ $\eta =\gamma_1(x,y,t)u+\gamma_2(x,y,t)$\end{tabular}\\  \hdashline   
    $\xi^1(x,y,t,u)$\\ $\eta=\int \frac{\partial \xi^1}{\partial x} du +( \frac{d \xi^2}{dy}-\frac{1}{2} \dot \xi^3-\lambda(x,y))u+\gamma(x,y,t)$\end{tabular} &  \begin{tabular}[c]{@{}l@{}}  L/N $\Longleftrightarrow$ L/N  \\ \\ \begin{tabular}[c]{@{}l@{}} L $\Longleftrightarrow$ N   \\
    w. $\xi^1_u \neq 0$  \end{tabular}\end{tabular}\\ \hline
  3.2.1   & \begin{tabular}[c]{@{}l@{}}$f(x,y,t,u,u_x,u_y)$ \\ $g(x,y,t,u,u_x,u_y)$ \\ $0$ \end{tabular} 
    & \begin{tabular}[c]{@{}l@{}}$\xi^1(x,y)$\\ $\xi^2(x,y)$\\ $\eta=(\xi^1_x+\xi^2_y-\frac{1}{2}\dot\xi^3+\lambda(x,y))u+\gamma(x,y,t)$\end{tabular} & L/N $\Longleftrightarrow$ L/N   \\ \hline 
     3.1.2   & \begin{tabular}[c]{@{}l@{}}$f(x,y,t,u,u_x)$ \\ $g(x,y,t,u,u_x,u_y,u_t)$ \\ $h(x,y,t,u,u_x,u_y,u_t)$ \end{tabular} 
    & \begin{tabular}[c]{@{}l@{}}$\xi^1(x)$\\ $\xi^2(y,t)$\\ $\eta(x,y,t,u)$\end{tabular} & \begin{tabular}[c]{@{}l@{}} L $\Longleftrightarrow$ N   \\
    w.   $\eta_{uu}\neq 0$ \end{tabular} \\ \hline 
     3.1.3   & \begin{tabular}[c]{@{}l@{}}$f(x,y,t,u,u_x)$ \\ $g(x,y,t,u,u_x,u_y,u_t)$ \\ $0$ \end{tabular} 
    & \begin{tabular}[c]{@{}l@{}}$\xi^1(x)$\\ $\xi^2(y,t)$\\ $\eta=\xi^2_y-\frac{1}{2}\dot\xi^3+\lambda(x,y))u+\gamma(x,y,t)$\end{tabular} & L/N $\Longleftrightarrow$ L/N   \\ \hline 
    -   & \begin{tabular}[c]{@{}l@{}}$f(x,y,t,u,u_x,u_t)$ \\ $g(x,y,t,u,u_y)$ \\ $h(x,y,t,u,u_x,u_y,u_t)$ \end{tabular} 
    & \begin{tabular}[c]{@{}l@{}}$\xi^1(x,t)$\\ $\xi^2(y)$\\ $\eta(x,y,t,u)$\end{tabular} & \begin{tabular}[c]{@{}l@{}} L $\Longleftrightarrow$ N   \\
    w.  $\eta_{uu}\neq 0$ \end{tabular}   \\ \hline 
     -   & \begin{tabular}[c]{@{}l@{}}$f(x,y,t,u,u_x)$ \\ $g(x,y,t,u,u_y)$ \\ $0$ \end{tabular} 
    & \begin{tabular}[c]{@{}l@{}}$\xi^1(x,t)$\\ $\xi^2(y)$\\ $\eta=(\xi^1_x-\frac{1}{2}\dot\xi^3+\lambda(x,y))u+\gamma(x,y,t)$\end{tabular} & L/N $\Longleftrightarrow$ L/N   \\ \hline 
     3.2.2   & \begin{tabular}[c]{@{}l@{}}$f(x,y,t,u,u_x,u_y)$ \\ $g(x,y,t,u,u_x,u_y)$ \\ $h(x,y,t,u,u_x,u_y)$ \end{tabular} 
    & \begin{tabular}[c]{@{}l@{}}$\xi^1(x,t)$\\ $\xi^2(x,y)$\\ $\eta=(\lambda(x,y)-\frac{1}{2}\dot\xi^3)u+\gamma(x,y,t)$\end{tabular} & L/N $\Longleftrightarrow$ L/N   \\ \hline
     
\end{tabular}
\end{table}

\subsubsection{Case $\cfrac{\partial f}{\partial u_t}=\cfrac{\partial g}{\partial u_t}=\cfrac{\partial h}{\partial u_t}=0$}
The results of the previous cases  hint at  that the dependence on $u_t$ is important in order to be linearized. To clarify that, here we shall investigate the family of nonlinear equations not depending on $u_t$ or its mixed derivatives.
\begin{lemma}
 (2+1) dimensional family of wave equations that can expressed as 
$$
f(x,y,t,u,u_x,u_y)_x+g(x,y,t,u,u_x,u_y)_y+h(x,y,t,u,u_x,u_y)=u_{tt}
$$
does not admit any point transformation that can map linear and nonlinear members into each other.
\end{lemma}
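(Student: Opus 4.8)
The plan is to run the same machinery as in the preceding subsections, now imposing $\partial h/\partial u_t=0$ as the only new restriction on top of the case $\partial f/\partial u_t=\partial g/\partial u_t=0$, whose generators \eqref{futgut} and \eqref{uc} I take as the starting point. Setting $h_{u_t}=0$ deletes the coordinate $t^3=h_{v_3}$ from the cover \eqref{manifold}, so its component in the prolonged field must vanish, giving the single determining equation $T^3=0$. First I would expand $T^3$ from \eqref{additional_components}; since here $s^{13}=s^{23}=t^3=0$ and $G$ carries no explicit $\Sigma^3$ (so the $s^{33}$ term drops as well), every term of that formula disappears except $\partial G/\partial v_3$, and the equation collapses to $\partial G/\partial v_3=0$.

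Next I would locate the surviving $v_3$ in $G=-t_i\xi^i-\tau\eta-t^i\zeta_i+\mathcal H$. By \eqref{futgut} and \eqref{uc} the functions $\xi^i,\eta,\zeta_1,\zeta_2$ are free of $v_3$, and $\mu^1,\mu^2$ depend on $f,g,v_1,v_2$ but not on $v_3$ because $f,g$ are now $u_t$-free; hence $\partial G/\partial v_3=\partial\mathcal H/\partial v_3$. Writing $\mathcal H$ through \eqref{generators1} (equivalently \eqref{mathcalh}), the extra product $(w+\phi^i_u v_i)h$ carries no $v_3$ because $\phi^1_u=\phi^2_u=0$ and $h$ is $u_t$-free, so the only $v_3$-dependence enters via $\zeta_3=\eta_t+(\eta_u+\dot\xi^3)v_3$. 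Substituting $\zeta_3$ into $\mathcal H$ and collecting powers of $v_3$, the coefficient of $v_3$ in $\partial\mathcal H/\partial v_3$ is a nonzero multiple of $\eta_{uu}$, which forces $\eta_{uu}=0$; the $v_3$-free part only fixes $\eta_{ut}$ against $\ddot\xi^3$ and adds nothing.

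The conclusion then follows from the shape of the resulting group. With $\xi^1=\xi^1(x,y)$, $\xi^2=\xi^2(x,y)$, $\xi^3=\xi^3(t)$ all independent of $u$ and $\eta$ affine in $u$, integrating \eqref{system} yields a point transformation of projectable, fibre-affine type, $\bar x=\bar x(x,y,\epsilon)$, $\bar y=\bar y(x,y,\epsilon)$, $\bar t=\bar t(t,\epsilon)$, $\bar u=\alpha(x,y,t,\epsilon)u+\beta(x,y,t,\epsilon)$. A map whose new independent variables never involve $u$ and whose new dependent variable is affine in $u$ sends linear wave equations to linear ones and nonlinear ones to nonlinear ones, so no member of this subfamily can be linearized, which is the assertion.

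I expect the middle paragraph to be the crux: one must verify that the total derivatives $D_x\mu^1,\,D_y\mu^2$ occurring in \eqref{generators1} and the term $(w+\phi^i_u v_i)h$ contribute nothing in $v_3$, so that the quadratic-in-$v_3$ piece generated by $\eta_{uu}$ inside $\partial_u\zeta_3\,v_3$ is genuinely the sole obstruction. This hinges precisely on $f,g,h$ all being $u_t$-free and on $\phi^1,\phi^2$ being $u$-free; once that bookkeeping is pinned down, $\eta_{uu}=0$ is immediate and the linear/nonlinear rigidity of projectable fibre-affine maps closes the argument.
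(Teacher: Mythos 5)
Your proposal is correct and follows essentially the same route as the paper: the single new determining equation $T^3=0$ collapses to $\partial G/\partial v_3=0$ and hence to $\partial\mathcal{H}/\partial v_3=0$, whose coefficient of $v_3$ (coming only from $\zeta_3$) forces $\eta_{uu}=0$ and yields $\eta=(\lambda(x,y)-\tfrac{1}{2}\dot\xi^3)u+\gamma(x,y,t)$, after which the projectable, fibre-affine form of the group rules out any linear--nonlinear map. Your bookkeeping of why the terms in $D_{x}\mu^1$, $D_{y}\mu^2$ and $(w+\phi^i_u v_i)h$ contribute nothing in $v_3$ is exactly the step the paper asserts more tersely.
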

\begin{proof}
In addition to the  determining equations examined in  section 3.2, here the infinitesimal generator $T^3$ must vanish as $t^3=\frac{\partial h}{\partial u_t}=0$.    From  \eqref{additional_components} we have
\bes
\label{g}
\frac{\partial G}{\partial v_3}=0.
\ees
By resorting to $G$, the foregoing equation  is reduced to
$
\cfrac{\partial \mathcal{H}}{\partial v_3}=0
$. Since $\mu^1$ and $\mu^2$, which are the components of $\mathcal{H}$   do not involve $v_3$ (see \eqref{uc}), the following equation is obtained on the other component $\zeta_3$:
\[
\frac{\partial^2 \zeta_3}{\partial t\partial v_3}+ \frac{\partial^2 \zeta_3}{\partial u\partial v_3}v_3+\frac{\partial \zeta_3}{\partial u}=0.
\] 
Introducing $\zeta_3$ given in \eqref{uc} into the equation we observe  
\be 
\label{etason}
\eta=(\lambda(x,y)-\frac{1}{2}\dot \xi^3)u+\gamma(x,y,t)
\ee
where $\lambda$ and $\gamma$ are continuously differentiable functions of their arguments. This result with the results of the infinitesimal generators of local coordinates $\xi^1=\xi^1(x,y),  \quad \xi^2=\xi^2(x,y),   \quad \xi^3=\xi^3(t)$ from  \eqref{futgut} together ends the proof.
\end{proof}

 \begin{corollary}
In view of the results of Cases 3.1.3, 3.2.1 and 3.2.2, in order to linearize a nonlinear member of a family of (2+1) dimensional wave type equation, at least one of the arbitrary functions $f,\ g$ or $h$  must involve $u_t$.
 \end{corollary}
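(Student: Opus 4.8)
The plan is to read the corollary as the contrapositive of the negative results already established, so that the substance is carried by the immediately preceding lemma together with its two companions. Throughout Section~3 a single criterion for the existence of a genuine linear$\leftrightarrow$nonlinear point map has emerged from the generators \eqref{generators2}: such a map can occur only if the transformation either lets a local coordinate depend on the dependent variable, i.e.\ $\xi^1_u\neq 0$ or $\xi^2_u\neq 0$ (so that $\bar x$ or $\bar y$ involves $u$), or lets the new dependent variable depend nonlinearly on the old one, i.e.\ $\eta_{uu}\neq 0$ (so that $\bar u_{uu}\neq 0$). First I would record this criterion as an explicit statement: if $\bar x,\bar y,\bar t$ are free of $u$ and $\bar u=\alpha(x,y,t)\,u+\beta(x,y,t)$ is affine in $u$, then the induced action on the first derivatives $u_x,u_y,u_t$ and on the terms $f,g,h$ is linear and fiber-preserving, so it sends linear members of \eqref{main} to linear members and nonlinear members to nonlinear members, with no crossing possible.

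Next I would take the contrapositive of the assertion and suppose that none of $f,g,h$ involves $u_t$, which is precisely the hypothesis of the last lemma, the case $\partial f/\partial u_t=\partial g/\partial u_t=\partial h/\partial u_t=0$. Invoking its conclusion together with the generators \eqref{futgut} inherited from Section~3.2, the admissible infinitesimal generators collapse to $\xi^1=\xi^1(x,y)$, $\xi^2=\xi^2(x,y)$, $\xi^3=\xi^3(t)$ and, by \eqref{etason}, to $\eta=(\lambda(x,y)-\tfrac12\dot\xi^3)\,u+\gamma(x,y,t)$. Hence $\xi^1_u=\xi^2_u=0$ and $\eta_{uu}=0$ hold simultaneously, so every enabling condition of the criterion fails at once.

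Applying the criterion then closes the argument: the only transformations of the form \eqref{generaltransformations} compatible with these generators have $\bar x=\bar x(x,y,t,\epsilon)$ and $\bar y=\bar y(x,y,t,\epsilon)$ free of $u$ and $\bar u$ affine in $u$, so they preserve linearity and cannot linearize any nonlinear member; this is exactly the contrapositive of the corollary. The two further cited cases serve to display the robustness of the conclusion under the extra restriction $h=0$: Case~3.2.1 (the case $\partial f/\partial u_t=\partial g/\partial u_t=h=0$) reproduces exactly the same collapse of the generators, while Case~3.1.3 shows that retaining $u_t$ in $g$ alone cannot rescue linearizability once $f$ has been stripped of $u_t$ and $h$ is set to zero, so that the presence of $u_t$ is necessary but by itself not sufficient.

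The step I expect to carry the real weight is conceptual rather than computational: pinning down and justifying the linearization criterion, namely that a point transformation projectable onto the base and affine in the fiber cannot interchange linear and nonlinear members of \eqref{main}. This fact is used implicitly in reading off the entries of Table~\ref{table:infinitesimal_table}, and isolating it as a clean statement is exactly what converts the accumulated special-case computations into the single necessary condition asserted by the corollary.
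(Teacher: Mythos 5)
Your argument is correct and follows essentially the paper's own route: the paper states this corollary without a separate proof, reading it directly off the negative results of Cases 3.1.3, 3.2.1 and 3.2.2, which is exactly what you do, with the useful addition of making explicit the criterion (a projectable, fiber-affine point transformation cannot exchange linear and nonlinear members) that the paper leaves implicit in the proofs of those Lemmas. One small refinement: Case 3.2.1 is not merely a ``robustness'' check but a logically required companion to Case 3.2.2, because the sub-family with $h=0$ is governed by the distinct determining equation $\mathcal{H}=0$ rather than $T^3=0$ and could a priori admit a larger equivalence group than the family with $h\neq 0$, $\partial h/\partial u_t=0$.
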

One can examine  existence of point transformations between linear and nonlinear equations for every other different choice of  functional forms of  arbitrary functions $f,\ g$ and $h$, by running similar procedures. We have given the  results examined in details in the manuscript  and some more  in the Table \ref{table:infinitesimal_table}.  The first column of the table shows in which section the problem is examined in the paper. $-$  means  calculations of that particular problem have not been given explicitly in the text. Second column express the infinitesimal generators  of the local coordinates whereas the last column indicate whether  transformations between linear and nonlinear equations are possible or not. L $ \Longleftrightarrow$ N states   the existence of linearization. If transformations map only a linear equation into another linear one or a nonlinear one into another nonlinear we represent that by L/N $ \Longleftrightarrow$ L/N.

 \section{Applications}
 In this section we shall give some applications  that map linear and nonlinear equations into each other. Infinitesimal generators for the general family, in their most general form is given by \eqref{generators2}. By taking  some continuously differentiable functions-as infinitesimal generators,  a class of nonlinear equations can be mapped onto linear equations. 
\subsection{$\xi^1= m(u),\ \xi^2=\xi^3=\eta=0$}
Let us take $\xi^1= m(u),\ \xi^2=\xi^3=\eta=0$ where $m(u)\in C^2$. 
Integrating the system of ordinary differential equations \eqref{system} under the initial conditions \eqref{initial} we reach to  the particular  equivalence transformations
\begin{align}
\label{m(u)}
& \bar x= x-\epsilon m(u),\ \bar y=y, \ \bar t= t,\nonumber\\
& \bar v_1= \frac{v_1}{1-\epsilon m'(u) v_1}, \ \bar v_2= \frac{v_2}{1-\epsilon m'(u) v_1}, \ \bar v_3= \frac{v_3}{1-\epsilon m'(u) v_1},\\
& \bar f= f-\frac{\epsilon m'(u) (g\ v_2-v_3^2)}{1-\epsilon m'(u) v_1}, \ \bar g= \frac{g}{1-\epsilon m'(u) v_1}, \, \bar h= \frac{h}{1-\epsilon m'(u) v_1}.\nonumber
\end{align}
By using the partial derivatives that can be written as:
\begin{align*}
\frac{\partial}{\partial \bar x} & = \frac{\partial}{\partial x} \frac{\partial x}{\partial \bar x} + \frac{\partial}{\partial x} \frac{\partial x}{\partial \bar u} \frac{\partial \bar u}{\partial \bar x} +\frac{\partial}{\partial u} \frac{\partial u}{\partial \bar u} \frac{\partial \bar u}{\partial \bar x} = (1+m'(u) \bar v_1) \frac{\partial}{\partial x} +\bar v_1 \frac{\partial}{\partial u} ,\\
\frac{\partial}{\partial \bar y} & = \frac{\partial}{\partial y} + \epsilon m'(u) \bar v_2 \frac{\partial}{\partial \bar x} +\bar v_2 \frac{\partial}{\partial u}  ,\\
\frac{\partial}{\partial \bar t} & = \frac{\partial}{\partial t} + \epsilon m'(u) \bar v_3 \frac{\partial}{\partial \bar x} +\bar v_3 \frac{\partial}{\partial u} .
\end{align*} 
 the transformed functions obtained in \eqref{m(u)} keep the family of equations invariant:
\be 
\bar f_{\bar x}+\bar g_{\bar y}+h=\bar v_{3_{\bar t}} \Rightarrow f_x+g_y+h=v_{3_{t}}.
\ee
and the transformed  functions $\bar f,\ \bar g, \ \bar h$ and $\bar v_3$ are written in terms of the transformed variables as:
\begin{align*}
& \bar f= \tilde f+ \epsilon m' (\bar u) \left( \frac{\bar u_{\bar t}^2}{1+\epsilon  m' (\bar u) \bar u_{\bar x} } -\bar u_{\bar y} \tilde g \right),\\
& \bar g= (1+ \epsilon  m' (\bar u) \bar u_{\bar x}) \tilde{g}, \quad \bar h= (1+ \epsilon  m' (\bar u) \bar u_{\bar x}) \tilde{h} , \quad \bar v_3= \bar u_{\bar t}
\end{align*}
where $\tilde{( )}$ represents the transformed form of the functions.\\
{\bf{Example 1:}}
By choosing $f=u_x,\ g=0$ and $h=0$, the set of the  point transformations obtained above is mapped  the nonlinear equation 
\be
\label{example}
 \left(\frac{\epsilon m'(\bar u) \bar u_{\bar t}^2+\bar u_{\bar x}}{1+\epsilon m'(\bar u) \bar u_{\bar x}}\right)_{\bar x}=  \bar u_{\bar t \bar t} 
\ee
onto the  classical constant coefficient wave equation in the plane $ 
u_{xx}=u_{tt}$.  
The  general solution of the wave equation  $u=\psi(y)(t-x)+ \phi(y) (t+x)$ generates a solution to the nonlinear equation  \eqref{example}:
$$
\bar u- \psi(\bar y)(\bar t-\bar x-\epsilon m(\bar u))-\phi(\bar y)(\bar t+\bar x+\epsilon m(\bar u))=0
$$ 
where $\psi(\bar y)$ and $\phi(\bar y)$ are arbitrary functions.
\subsection{$\xi^1=m(u),\ \xi^2=p(u)$}
More general transformations, by taking the infinitesimal generators of the local coordinates   $\xi^1=m(u),\ \xi^2=p(u),\ \xi^3=\eta=0$ are determined as: 
\begin{align*}
& \bar x= x-\epsilon m(u),\ \bar y=y-\epsilon p(u), \ \bar t= t,\ \bar u=u,\nonumber\\
& \bar v_1= \frac{v_1}{1-\epsilon (m' v_1+p' v_2)}, \ \bar v_2= \frac{v_2}{1-\epsilon (m' v_1+p' v_2)}, \ \bar v_3=  \frac{v_3}{1-\epsilon (m' v_1+p' v_2)},\\
& \bar f= \frac{(1-\epsilon m' v_1) f -\epsilon m' (  v_2 g- v_3^2) } {1-\epsilon (m' v_1+p' v_2)},\   \bar g= \frac{(1-\epsilon p' v_2)  g-\epsilon p'(v_1 f-v_3^2)}{1-\epsilon (m' v_1+p' v_2)}, \\
& \bar h= \frac{h}{1-\epsilon (m' v_1+p' v_2)}.\nonumber
\end{align*} 
The transformed functions become
\begin{align*}
&\bar f= (1+ \epsilon p' \bar u_{\bar y}) \tilde f -\epsilon m' \bar u_{\bar y} \tilde g + \frac{\epsilon m' \bar u_{\bar t}^2}{1+\epsilon(m' \bar u_{\bar x}+p' \bar u_{\bar y})},\\
&\bar g= (1+\epsilon m' \bar u_{\bar x}) \tilde{g} -\epsilon p' \bar u_{\bar x} \tilde f + \frac{\epsilon p' \bar u_{\bar t}^2}{1+\epsilon (m' \bar u_{\bar x}+p' \bar u_{\bar y})},\\
& \bar h= (1+\epsilon (m' \bar u_{\bar x} +p' \bar u_{\bar y}) \tilde h, \qquad \bar v_3=\bar u_{\bar t}.
\end{align*}
Note that in consequence of nonconstant  $m(u),$ and  $ p(u)$,  $\bar f$ and $\bar g$ involve $\bar u_{\bar x}, \ \bar u_{\bar y},\ \bar u_{\bar t}$ whatever $f,\ g$  are.  On the other hand  $\bar h$  depends on $\bar u_{\bar x}, \ \bar u_{\bar y}$ as long as $h$ is not  identically zero.  
 \subsection{$\xi^1= m(u,y),\ \xi^2=\xi^3=\eta=0$}
 Let $\xi^1= m(u,y),\ \xi^2=\xi^3=\eta=0$, then the equivalence transformations are determined as:
\begin{align*}
& \bar x= x-\epsilon m(u,y),\quad \bar y=y, \quad \bar t= t,\nonumber\\
& \bar v_1= \frac{v_1}{1-\epsilon m_u(u,y) v_1}, \ \bar v_2= \frac{v_2}{1-\epsilon m_u(u,y) v_1}, \ \bar v_3= \frac{v_3}{1-\epsilon m_u(u,y) v_1},\\
& \bar f= f-\frac{\epsilon [( m_y(u,y) +m_u(u,y) v_2) g+m_u(u,y) v_3^2]}{1-\epsilon m_u(u,y) v_1}, \ \bar g= \frac{g}{1-\epsilon m_u(u,y) v_1}, \\
& \bar h= \frac{h}{1-\epsilon m_u(u,y) v_1}.\nonumber
\end{align*} 
And the transformed functions of $f,\ g,\ h$  become
\begin{align*}
&\bar f= \tilde f- \epsilon [ m_{\bar y}+m_{\bar u} (1-\epsilon m_{\bar u} \bar u_{\bar x})\bar u_{\bar y}] \tilde g- \frac{\epsilon m_{\bar u} \bar u_{\bar t}^2}{1+\epsilon m_{\bar u} \bar u_{\bar x}}, \\
& \bar g=\tilde g (1+\epsilon m_{\bar u} \bar u_{\bar x}) , \quad \bar h= \tilde h (1+\epsilon m_{\bar u} \bar u_{\bar x}). \end{align*}
Note that   $\bar f$ depend on $\bar u_{\bar x} \text{ and } \bar u_{\bar t}$ whatever $\tilde f,\ \tilde g,\ \tilde h$ are.
 \subsection{$\xi^2= m(u,x),\ \xi^2=\xi^3=\eta=0$}
 Similar to the previous case, equivalence transformations are found as  
\begin{align*}
& \bar x= x,\quad \bar y=y-\epsilon m(u,x), \quad \bar t= t,\quad \bar u=u,\\
& \bar v_1= \frac{v_1+\epsilon m_x v_2 }{1-\epsilon m_u v_2}, \quad \bar v_2= \frac{v_2}{1-\epsilon m_u v_2}, \quad \bar v_3= \frac{v_3}{1-\epsilon m_u v_2},\\
& \bar f= \frac{f}{1-\epsilon m_u v_2},\quad \bar g=
g-\frac{\epsilon [(m_x +m_u) v_1) f+ m_u v_3^2]}{1-\epsilon m_u v_2} , \ \bar h= \frac{h}{1-\epsilon m_u v_2}. 
\end{align*} 
Transformed functions $\bar f,\ \bar g$ and $\bar h$ are then written as 
\begin{align*}
& f= (1+\epsilon m_{\bar u} \bar u_{\bar y})\tilde f, \
 \bar g= \tilde g - \epsilon [ m_{\bar x}+m_{\bar u} (1-\epsilon m_{\bar u} \bar u_{\bar y})\bar u_{\bar x}] \tilde g - \frac{\epsilon m_{\bar u} \bar u_{\bar t}^2}{1+\epsilon m_{\bar u} \bar u_{\bar y}},\\  
 &\bar h= (1+\epsilon m_{\bar u} \bar u_{\bar y} ) \tilde h.
\end{align*} 
 One can simply see   $\bar g$ involves  $\bar u_{\bar x} \text{ and } \bar u_{\bar t}$ for $m_u\neq 0$ even $\tilde g=0$.  
\section{Conclusion and Remarks}
In the present paper we have investigated the admissible linear and nonlinear  members of the general family of two dimensional wave equations that can be mapped onto each other. It is clear that because the family is taken in its most general form, it has not been possible to consider every particular member. We have tried to give as many cases as we can put in the article to make sure the reader understand the procedure, so that many other particular cases can be examined by simply applying the relevant restrictions  to the general infinitesimal generators derived here.  \\
 We have found the necessary functional dependencies of nonlinear equations that can be linearized and their  admissible equivalence transformations. Meanwhile it has been  shown that the term $u_t$ is a must for the existence of maps between linear and nonlinear equations via point transformations.

  \end{document}